\newcommand{\pullbackcorner}[1][dr]{\save*!/#1-1.4pc/#1:(-1,1)@^{|-}\restore}
\theoremstyle{plain}
\newtheorem{rem}[theorem]{Remark}
\newtheorem{propo}[theorem]{Proposition}
\newcommand{\ACT}{`}
\newcommand{\AX}{\mathsf{ax}}
\newcommand{\bij}{\cong}
\newcommand{\BOOL}{\mathbb{B}}
\newcommand{\cat}{\mathbf{C}}
\newcommand{\C}{\mathsf{C}}
\newcommand{\CODE}{\mathsf{code}}
\newcommand{\CODEEL}{\mathsf{codeEl}}
\newcommand{\COF}{\mathsf{cof}}
\newcommand{\COFAND}{\mathsf{cofAnd}}
\newcommand{\COFEMPTY}{\mathsf{cof}_{\bot}}
\newcommand{\COFEXT}{\mathsf{cofExt}}
\newcommand{\COFI}{\mathsf{cofI}}
\newcommand{\COFO}{\mathsf{cofO}}
\newcommand{\COFPROP}{\mathsf{cofisProp}}
\newcommand{\COFOR}{\mathsf{cofOr}}
\newcommand{\comp}{\circ}
\newcommand{\COMP}{\mathsf{CCHM}}
\newcommand{\CONG}{\mathsf{cong}}
\newcommand{\COP}{{\scriptstyle\sqrt{\mathstrut}}}
\newcommand{\Cset}{\widehat{\Box}}
\newcommand{\Cubes}{\Box}
\newcommand{\Ec}{\diamond}
\newcommand{\EL}{\mathsf{El}}
\newcommand{\ELCODE}{\mathsf{Elcode}}
\newcommand{\ELT}{\mathsf{Elt}}
\newcommand{\EMPTY}{\bot}
\newcommand{\EMPTYELIM}{\bot\mathsf{elim}}
\newcommand{\EQ}{\equiv}
\newcommand{\EXT}{\mathbin{\nnearrow}}
\newcommand{\FIB}{\mathsf{Fib}}
\newcommand{\FST}{\mathsf{fst}}
\newcommand{\FUN}{\shortrightarrow}
\newcommand{\FUNEXT}{\mathsf{funext}}
\newcommand{\HOLE}{\text{\texttt{\_}}}
\newcommand{\I}{\mathtt{I}}
\newcommand{\ID}{\mathsf{id}}
\newcommand{\INONTRIV}{\O{\not\EQ}\I}
\newcommand{\Int}{\mathrm{I}}
\newcommand{\INT}{\mathbb{I}}
\newcommand{\INTUNIV}{\mathsf{IntUniv}}
\newcommand{\ISFIB}{\mathsf{isFib}}
\renewcommand{\L}{\mathsf{L}}
\newcommand{\MIN}{\sqcap}
\newcommand{\mono}{\rightarrowtail}
\newcommand{\morphism}{\rightarrow}
\newcommand{\Nat}{\mathbb{N}}
\renewcommand{\O}{\mathtt{O}}
\newcommand{\op}{\mathrm{op}}
\newcommand{\OR}{\vee}
\newcommand{\PAIR}[2]{(#1\mathbin{,}#2)}
\renewcommand{\PATH}{\wp}
\newcommand{\PR}{\mathsf{pr}}
\newcommand{\PROPCOF}{\mathsf{isPropcof}}
\newcommand{\R}{\mathsf{R}}
\newcommand{\REFL}{\mathsf{refl}}
\newcommand{\REV}{\mathsf{rev}}
\newcommand{\Set}{\mathcal{S}\!\mathit{et}}
\newcommand{\SET}{\mathsf{Set}}
\newcommand{\SIGMA}[2]{\Sigma\,{#1:#2}\mathbin{,}}
\newcommand{\SND}{\mathsf{snd}}
\newcommand{\STRAX}{\mathsf{strax}}
\newcommand{\U}{\mathsf{U}}
\newcommand{\UIP}{\mathsf{uip}}
\newcommand{\UNIT}{\top}
\newcommand{\UNITVAL}{\mathsf{tt}}
\newcommand{\y}{\mathrm{y}}
\title{Internal Universes in Models of Homotopy Type Theory}
\author{Daniel R. Licata}{Wesleyan University Dept.~Mathematics \&
  Computer Science, Middletown,
  USA}{}{https://orcid.org/0000-0003-0697-7405}{Supported by United
  States Air Force Research Laboratory, agreement numbers
  FA-95501210370 and FA-95501510053.}
\author{Ian Orton}{University of Cambridge Dept.~Computer Science \&
  Technology, Cambridge,
  UK}{}{https://orcid.org/0000-0002-9924-0623}{Supported by a UK EPSRC
  PhD studentship, funded by grants EP/L504920/1, EP/M506485/1.}
\author{Andrew M. Pitts}{University of Cambridge Dept.~Computer
  Science \& Technology, Cambridge,
  UK}{}{https://orcid.org/0000-0001-7775-3471}{}
\author{Bas Spitters}{Aarhus University Dept.~Computer Science,
  Aarhus, DK}{}{https://orcid.org/0000-0002-2802-0973}{Supported by
  the Guarded Homotopy Type Theory project, funded by the Villum
  Foundation, project number 12386.}
\authorrunning{D. R. Licata, I. Orton, A.~M. Pitts and B. Spitters} 
\subjclass{\ccsdesc[500]{Theory of computation~Type theory}}
\keywords{cubical sets, dependent type theory, homotopy type theory,
  internal language, modalities, univalent foundations, universes}
\begin{document}

\maketitle

\begin{abstract}
  We begin by recalling the essentially global character of universes
  in various models of homotopy type theory, which prevents a
  straightforward axiomatization of their properties using the
  internal language of the presheaf toposes from which these model are
  constructed. We get around this problem by extending the internal
  language with a modal operator for expressing properties of global
  elements. In this setting we show how to construct a universe that
  classifies the Cohen-Coquand-Huber-M\"ortberg (CCHM) notion of
  fibration from their cubical sets model, starting from the
  assumption that the interval is tiny---a property that the interval
  in cubical sets does indeed have. This leads to an elementary
  axiomatization of that and related models of homotopy type theory
  within what we call \emph{crisp type theory}.
\end{abstract}

\section{Introduction}
\label{sec:int}

Voevodsky's univalence axiom in Homotopy Type Theory
(HoTT)~\cite{HoTT} is motivated by the fact that constructions on
structured types should be invariant under isomorphism. From a
programming point of view, such constructions can be seen as
type-generic programs. For example, if $G$ and $H$ are isomorphic
groups, then for any construction $C$ on groups, an instance $C(G)$
can be \emph{transported} to $C(H)$ by lifting this isomorphism using
a type-generic program corresponding to $C$. As things stand, there is
no single definition of the semantics of such generic programs;
instead there are several variations on the theme of giving a
computational interpretation to the new primitives of HoTT (univalence
and higher inductive types) via different constructive
models~\cite{CoquandT:modttc,CoquandT:cubttc,AngiuliC:comhtt,LicataDR:carctt},
the pros and cons of which are still being explored.

As we show in this paper, that exploration benefits from being carried
out in a type-theoretic language.  This is different from developing
the consequences of HoTT itself using a type-theoretic language, such
as intensional Martin-L\"of type theory with axioms for univalence and
higher inductive types, as used in \cite{HoTT}. There \emph{all} types
have higher-dimensional structure, or ``are fibrant'' as one says, via
the structure of the iterated identity types associated with
them. Contrastingly, when using type theory to describe models of
HoTT, being fibrant is an explicit structure external to a type; and
that structure can itself be classified by a type, so that users of
the type theory can \emph{prove} that a type is fibrant by inhabiting
a certain other type.  As an example, consider the cubical sets model
of type theory introduced by Cohen, Coquand, Huber and
M\"ortberg~(CCHM)~\cite{CoquandT:cubttc}. This model uses a presheaf
topos on a particular category of cubes that we denote by $\Cubes$,
generated by an interval object $\INT$, maps out of which represent
paths.  The corresponding presheaf topos $\hat{\Cubes}$ has an
associated category with families (CwF)~\cite{DybjerP:intt} structure
that gives a model of Extensional Martin-L\"of Type
Theory~\cite{Martin-LoefP:inttt} in a standard
way~\cite{HofmannM:synsdt}.  While not all types in this presheaf
topos have a fibration structure in the CCHM sense, working within
constructive set theory, CCHM show how to make a new CwF of fibrant
types out of this presheaf CwF, one which is a model of Intensional
Martin-L\"of Type Theory with univalent universes and (some) higher
inductive types~\cite{HoTT}. Their model construction is rather subtle
and complicated.  Coquand noticed that the CCHM version of Kan
fibration could be more simply described in terms of partial elements
in the \emph{internal language} of the topos.  Some of us took up and
expanded upon that suggestion in~\cite{PittsAM:aximct} and
\cite[Section~4]{SpittersB:guactt}. Using Extensional Martin-L\"of Type
Theory with an impredicative universe of propositions (one candidate
for the internal language of toposes), those works identify some
relatively simple axioms for an interval and a collection of
Kan-filling shapes (\emph{cofibrant} propositions) that are sufficient
to define a CwF of CCHM fibrations and prove most of its properties as
a model of univalent foundations, for example, that $\Pi$, $\Sigma$,
path and other types are fibrant.  These internal language
constructions can be used as an intermediate point in constructing a
concrete model in cubical sets: the type theory of HoTT~\cite{HoTT}
can be translated into the internal language of the topos, which has a
semantics in the topos itself in a standard way.  The advantages of
this indirection are two-fold. First, the definition and properties of
the notion of fibration (both the CCHM notion~\cite{CoquandT:cubttc}
and other related ones~\cite{LicataDR:carctt,ShulmanM:typtsc}) are
simpler when expressed in the internal language; and secondly, so long
as the axioms are not too constraining, it opens up the possibility of
finding new models of HoTT. Indeed, since our axioms do not rely on
the infinitary aspects of Grothendieck toposes (such as having
infinite colimits), it is possible to consider models of them in
elementary toposes, such as Hyland's effective
topos~\cite{FruminD:homtmf,UemuraT:cubaip}.

From another point of view, the internal language of the presheaf
topos can itself be viewed as a two-level type
theory~\cite{AltenkirchT:exthtt,VoevodskyV:simtst} with fibrant and
non-fibrant types, where being fibrant is classified by a type, and
the constructions are a library of fibrancy instances for all of the
usual types of type theory.  Directed type
theory~\cite{ShulmanM:typtsc} has a very similar story: it adds a
directed interval type and a logic of partial elements to homotopy
type theory, and using them defines some new notions of
higher-dimensional structure, including co- and contravariant
fibrations.

However, the existing work describing models using an internal
language \cite{PittsAM:aximct, SpittersB:guactt, LicataDR:carctt}
does not encompass \emph{universes} of fibrant types.  The lack of
universes is a glaring omission for making models of HoTT, due to both
their importance and the difficulty of defining them correctly.
Moreover, it is an impediment to using internal language presentations
of cubical type theory as a two-level type theory. For example, most
constructions on higher inductive types, like calculating their
homotopy groups, require a fibrant universe of fibrant types; and
adding universes to directed type theory would have analogous
applications.  Finally, packaging the fibrant types together into a
universe restores much of the convenience of working in a language
where all types are fibrant: instead of passing around separate
fibrancy proofs, one knows that a type is fibrant by virtue of the
universe to which it belongs.

In this paper, we address this issue by studying universes of fibrant
types expressed in internal languages for models of cubical type
theories.  CCHM \cite{CoquandT:cubttc} define a universe concretely
using a version of the Hofmann-Streicher universe construction in
presheaf toposes~\cite{HofmannM:lifgu}. This gives a classifier for
their notion of fibration---the universe is equipped with a CCHM
fibration that gives rise to every fibration (with small fibres) by
re-indexing along a function into the universe.  In this way one gets
a model of a Tarski-style universe closed under whatever type-forming
operations are supported by CCHM fibrations.  Thus, there is an
appropriate semantic target for a universe of fibrant types, but
neither \cite{PittsAM:aximct}, nor \cite{SpittersB:guactt} gave a
version of such a universe expressed in the internal language.  This
is for a good reason: \cite[Remark~7.5]{PittsAM:aximct-jv} points out
that there can be no \emph{internal} universe of types equipped with a
CCHM fibration that weakly classifies fibrations. We recall in detail
why this is the case in Section~\ref{sec:nogti}, but the essence is
that na\"ive axioms for a weak classifier for fibrations imply that a
family of types, each member of which is fibrant, has to form a
fibrant family; but this is not true for many notions of fibration,
such as the CCHM one.

To fix this issue, in Section~\ref{sec:loctt} we enrich the internal
language to a \emph{modal} type theory with two context
zones~\cite{PfenningF:judrml,DePaivaV:fibmtt,ShulmanM:brofpt},
inspired in particular by the fact that cubical sets are a model of
Shulman's spatial type theory.  In a judgement
$\Delta \mid \Gamma \vdash a : A$ of this modal type theory, the
context $\Gamma$ represents the usual local elements of types in the
topos, while the new context $\Delta$ represents global ones.  The dual
context structure is that of an S4 necessity modality in modal logic,
because a global element determines a local one, but global elements
cannot refer to local elements.  We use Shulman's term ``crisp'' for
variables from $\Delta$, and call the type theory \emph{crisp type
  theory}, because we do not in fact use any of the modal type
operators of his spatial type theory, but just $\Pi$-types
whose domains are crisp.  Using these crisp $\Pi$-types, we give
axioms that specify a universe that classifies global
fibrations---the modal structure forbids the internal substitutions
that led to inconsistency.

One approach to validating these universe axioms would be to check
them directly in a cubical set model; but we can in fact do more work
using crisp type theory as the internal language and reduce the
universe axioms to a structure that is simpler to check in
models. Specifically, in Theorem~\ref{thm:universe}, we construct such
a universe from the assumption that the interval $\INT$ is
\emph{tiny}, which by definition means that its exponential functor
$\INT\FUN\HOLE$ has a right adjoint (a global one, not an internal
one---this is another example where crisp type theory is needed to
express this distinction). The ubiquity of right adjoints to
exponential functors was first pointed out by
Lawvere~\cite{LawvereFW:towdst} in the context of synthetic
differential geometry. Awodey pointed out their occurrence in
interval-based models of type theory in his work on various cube
categories~\cite{AwodeyS:notcmt}.  As far as we know, it was Sattler
who first suggested their relevance to constructing universes in such
models (see \cite[Remark~8.3]{SattlerC:equepm}).  It is indeed the
case that the interval object in the topos of cubical sets is
tiny. Some ingenuity is needed to use the right adjoint to
$\INT\FUN\HOLE$ to construct a universe with a fibration that gives
rise to every other one up to equality, rather than just up to
isomorphism; we employ a technique of
Voevodsky~\cite{VoevodskyV:csyduc} to do so.

Finally, we describe briefly some applications in
Section~\ref{sec:modhtt}.  First, our universe construction based on a
tiny interval is the missing piece that allows a completely internal
development of a model of univalent foundations based upon the CCHM
notion of fibration, albeit internal to crisp type theory rather than
ordinary type theory.  Secondly, we describe a preliminary result
showing that our axioms for universes are suitable for building type
theories with hierarchies of universes, each with a different notion
of fibration.

The constructions and proofs in this paper have been formalized in
Agda-flat~\cite{Agda-flat}, an appealingly simple extension of
Agda~\cite{Agda} that implements crisp type theory; see
\url{https://doi.org/10.17863/CAM.22369}. Agda-flat
was provided to us by Vezzosi as a by-product of his work on modal
type theory and parametricity~\cite{VezzosiA:parqdt}.

\section{Internal description of fibrations}
\label{sec:intdcf}

We begin by recalling from \cite{PittsAM:aximct-jv,SpittersB:guactt}
the internal description of fibrations in presheaf models, using CCHM
fibrations~\cite[Definition~13]{CoquandT:cubttc} as an example. Rather
than using Extensional Martin-L\"of Type Theory with an impredicative
universe of propositions as in
\cite{PittsAM:aximct-jv,SpittersB:guactt}, here we use an intensional
and predicative version, therefore keeping within a type theory with
decidable judgements.\footnote{Albeit at the expense of some
  calculations with universe levels; Coq's universe polymorphism would
  probably deal with this aspect automatically.}  Our type theory of
choice is the one implemented by Agda~\cite{Agda}, whose assistance we
have found invaluable for developing and checking the definitions.
Adopting Agda-style syntax, dependent function types are written
$(x : A) \FUN B\,x$, or $\{x : A\} \FUN B\,x$ if the argument to the
function is implicit; non-dependent function types are written
$(\HOLE:A)\FUN B$, or just $A\FUN B$. There is a non-cumulative
hierarchy of Russell-style~\cite{LuoZ:notutt} universe types
$\SET=\SET_0 : \SET_1 : \SET_2 : \SET_3 \ldots$ Among Agda's inductive
types we need identity types
$\HOLE\EQ\HOLE : \{A : \SET_n\}\FUN A \FUN A \FUN \SET_n$, which form
the inductively defined family of types with a single constructor
$\REFL : \{A : \SET_n\}\{x:A\} \FUN {x \EQ x}$; and we need the empty
inductive type $\EMPTY:\SET$, which has no constructors.  Among Agda's
record types (inductive types with a single constructor for which
$\eta$-expansion holds definitionally) we need the unit type
$\top:\SET$ with constructor $\UNITVAL:\UNIT$; and dependent products
($\Sigma$-types), that we write as $\SIGMA{x}{A} B\,x$ and which are
dependent record types with constructor
$\PAIR{\HOLE}{\HOLE}:(x : A)(\HOLE : B\,x) \FUN \SIGMA{x}{A} B\,x$ and
fields (projections) $\FST: (\SIGMA{x}{A} B\,x) \FUN A$ and
$\SND : (z : \SIGMA{x}{A} B\,x) \FUN B(\FST\,z)$.

This type theory can be interpreted in (the category with families of)
any presheaf topos, such as the one defined below, so long as we
assume that the ambient set theory has a countable hierarchy of
Grothendieck universes; in particular, one could use a constructive
ambient set theory such as IZF~\cite{AczelP:reltts} with universes. We
will use the fact that the interpretation of the type theory in
presheaf toposes satisfies \emph{function extensionality} and
\emph{uniqueness of identity proofs}:
\begin{align}
  \label{eq:1}
  &\FUNEXT : \{A : \SET_n\}\{B : A \FUN\SET_m\}\{f\;g : (x : A) \FUN
  B\,x\}((x : A) \FUN f\,x \EQ g\,x) \FUN {f \EQ g} \\
  \label{eq:2} 
  &\UIP : \{A : \SET_n\}\{x\;y : A\}\{p\;q : x \EQ y\} \FUN {p \EQ q} 
\end{align}

\begin{definition}[\textbf{Presheaf topos of de Morgan cubical sets}]
  \label{def:cset}
  Let $\Cubes$ denote the small category with finite products which is
  the Lawvere theory of De~Morgan algebra
  (see~\cite[Chap.~XI]{BalbesR:disl} and
  \cite[Section~2]{SpittersB:cubstt}). Concretely, $\Cubes^\op$ consists
  of the free De~Morgan algebras on $n$ generators, for each
  $n\in\Nat$, and the homomorphisms between them. Thus $\Cubes$
  contains an object $\Int$ that generates the others by taking finite
  products, namely the free De~Morgan algebra on one generator. This
  object is the generic De~Morgan algebra and in particular it has two
  distinct global elements, corresponding to the constants for the
  greatest and least elements.  The \emph{topos of cubical
    sets}~\cite{CoquandT:cubttc}, which we denote by $\Cset$, is the
  category of $\Set$-valued functors on $\Cubes^\op$ and natural
  transformations between them. The Yoneda embedding, written
  $\y:\Cubes\hookrightarrow\Cset$, sends $\Int\in\Cubes$ with its two
  distinct global elements to a representable presheaf $\INT = \y\Int$
  with two distinct global elements.  This interval $\INT$ is used to
  model path types: a path in $A$ from $a_0$ to $a_1$ is any morphism
  $\INT \to A$ that when composed with the distinct global elements
  gives $a_0$ and $a_1$.
\end{definition}
The toposes used in other cubical
models~\cite{CoquandT:modttc,AngiuliC:comhtt,LicataDR:carctt} vary the
choice of algebra from the De~Morgan case used above;
see~\cite{BuchholtzU:varcs}.  To describe all these cubical models
using type theory as an internal language, we postulate the existence
of an \emph{interval} type $\INT$ with two distinct elements, which we
write as $\O$ and $\I$:
\begin{equation}
  \INT : \SET
  \qquad\qquad
  \O :\INT
  \qquad\qquad
  \I : \INT
  \qquad\qquad
  \INONTRIV : (\O \EQ \I)\FUN\EMPTY
  \label{eq:3}\\
\end{equation}
Apart from an interval, the other data needed to define a cubical sets
model of homotopy type theory is a notion of \emph{cofibration}, which
specifies the shapes of filling problems that can be solved in a
dependent type.  For this, CCHM~\cite{CoquandT:cubttc} use a
particular subobject of $\Omega\in \Cset$ (the subobject classifier in
the topos $\Cset$), called the \emph{face lattice}; but other choices
are possible~\cite{PittsAM:aximct-jv}. Here, we avoid the use of the
impredicative universe of propositions $\Omega$ and just assume the
existence of a collection of ``cofibrant'' types in the first universe
$\SET$, including at least the empty type $\EMPTY$ (in
Section~\ref{sec:modhtt}, we will introduce more cofibrations, needed
to model various type constructs):
\begin{equation}
  \label{eq:4}
  \COF : \SET\FUN\SET
  \qquad
  \COFEMPTY: \COF\,\EMPTY
\end{equation}
We call $\varphi:\SET$ \emph{cofibrant} if $\COF\,\varphi$ holds, that
is, if we can supply a term of that type.  To define the
fibrations as a type in the internal language we
use two pieces of notation. First, the \emph{path functor}
associated with the interval $\INT$ is
\begin{align}
  &\PATH : \SET_n\FUN\SET_n
  &&\PATH\ACT : \{A:\SET_n\}\{B:\SET_m\}(f: A \FUN  B) \FUN \PATH\,A
     \FUN\PATH\,B \label{eq:5}\\ 
  &\PATH\,A = \INT \FUN A
  &&\PATH\ACT f\,p\,i = f(p\,i) \notag
\end{align}
Secondly, we define the following \emph{extension} relation
\begin{equation}
   \label{eq:6}
  \HOLE\EXT\HOLE : \{\varphi : \SET\}\{A : \SET_n\}(t : \varphi \FUN
  A)(x : A) \FUN \SET_n
  \qquad
  {t\EXT x} = (u:\varphi)\FUN{t\,u\EQ x}
\end{equation}
Thus $t\EXT x$ is the type of proofs that the partial element
$t:\varphi\FUN A$ extends to the (total) element $x:A$.  We will use
this when $t$ denotes a partial element of $A$ of cofibrant extent,
that is when we have a proof of $\COF\,\varphi$.

\begin{definition}[\textbf{fibrations}] 
  \label{def:cchmf}
  The type $\ISFIB\,A$ of \emph{fibration structures} for a family of
  types $A:\Gamma\FUN\SET_n$ over some type $\Gamma:\SET_m$ consists
  of functions taking any path $p:\PATH\,\Gamma$ in the base type to a
  \emph{composition structure} in $\C(A\comp p)$:
  \begin{equation}
    \label{eq:7}
    \ISFIB : (\Gamma:\SET_m)(A:\Gamma\FUN\SET_n)
    \FUN \SET_{1\sqcup m \sqcup n}
    \qquad
    \ISFIB\,\Gamma\,A = (p :\PATH\,\Gamma)\FUN
      \C(A\comp p)
    \end{equation}
    Here $\C$ is some given function
    $\PATH\,\SET_n \FUN\SET_{1\sqcup n}$ (polymorphic in the universe
    level $n$) which parameterizes the notion of fibration. Then for
    each type $\Gamma$, the type $\FIB_n\,\Gamma$ of \emph{fibrations}
    over it with fibers in $\SET_n$ consists of families equipped with
    a fibration structure
  \begin{equation}
    \label{eq:8}
    \FIB_n  : (\Gamma : \SET_m) \FUN  \SET_{m
      \sqcup(n+1)}
    \qquad
    \FIB_n\,\Gamma = \SIGMA{A}{(\Gamma\FUN\SET_n)}
      \ISFIB\,\Gamma\,A 
    \end{equation}
  and there are \emph{re-indexing} functions, given by
  composition of dependent functions ($\HOLE\comp\HOLE$) 
  \begin{align}
    &\HOLE[\HOLE] : \{\Gamma : \SET_k\}\{\Gamma' : \SET_m\}
      (\Phi : \FIB_n\,\Gamma)(\gamma : \Gamma' \FUN \Gamma) \FUN
      \FIB_n\,\Gamma'\\ 
    &\PAIR{A}{\alpha}[\gamma] =
      \PAIR{A\comp f}{\alpha\comp\PATH\ACT\,f}\notag
  \end{align}
  A \emph{CCHM fibration} is the above notion of fibration for the
  composition structure $\COMP: \PATH\,\SET_n \FUN\SET_{1\sqcup n}$
  from~\cite{CoquandT:cubttc}:
  \begin{equation}
    \label{eq:22}
    \COMP\,P =
    \begin{array}[t]{@{}r}
      (\varphi:\SET)(\HOLE:\COF\,\varphi)(p : (i :\INT)
      \FUN \varphi \FUN P\,i) \FUN 
      (\SIGMA{a_0}{P\,\O} p\,\O\EXT a_0) \FUN \quad\mbox{}\\
      (\SIGMA{a_1}{P\,\I} p\,\I\EXT a_1)   
    \end{array}   
  \end{equation}
  Thus the type $\COMP\,P$ of CCHM composition structures for a path
  of types $P:\PATH\,\SET_n$ consists of functions taking any
  dependently-typed path of partial elements
  $p : (i :\INT) \FUN \varphi \FUN P\,i$ of cofibrant extent to a
  function mapping extensions of the path at one end $p\,\O\EXT a_0$,
  to extensions of it at the other end $p\,\I\EXT a_1$.  When the
  cofibration is $\EMPTY$, this $\ISFIB\,\Gamma\,A$ expands to the
  statement that for all paths $p : \INT \to \Gamma$,
  $A(p\, \O) \to A(p\, \I)$, so that this internal language type says
  that $A$ is equipped with a transport function along paths in
  $\Gamma$.  The use of cofibrant partial elements generalizes
  transport with a notion of path composition, which is used to show
  that path types are fibrant.
\end{definition}

Other notions of fibration follow the above definitions but vary the
definition of $\C: \PATH\,\SET_n \FUN\SET_{1\sqcup n}$; for
example, generalized diagonal Kan composition~\cite{LicataDR:carctt}.
Co/contravariant fibrations in directed type
theory~\cite{ShulmanM:typtsc} also have the form of $\ISFIB$
for some $\C$, but with $\PATH$ being directed paths.
Definition~\ref{def:cchmf} illustrates the advantages of
internal-language presentations; in particular,
uniformity~\cite{CoquandT:cubttc} is automatic.

If $\Gamma$ denotes an object of the cubical sets topos $\Cset$, then
$\FIB_0\,\Gamma$ denotes an object whose global sections correspond to
the elements of the set $\mathrm{FTy}(\Gamma)$ of families over
$\Gamma$ equipped with a composition structure as defined
in~\cite[Definition~13]{CoquandT:cubttc}.  Our goal now is to first
recall that there can be no universe that weakly classifies these CCHM
fibrations in an internal sense, and then move to a modal type theory
where such a universe can be expressed.

\section{The "no-go" theorem for internal universes}
\label{sec:nogti}

In this section we recall from \cite[Remark~7.5]{PittsAM:aximct-jv}
why there can be no universe that weakly classifies CCHM fibrations in
an internal sense. Such a weak classifier would be given by the
following data
\begin{equation}
  \label{eq:9}
  \begin{aligned}
    &\U : \SET_2
    && \CODE : \{\Gamma:\SET\}(\Phi : \FIB_0\,\Gamma) \FUN
    \Gamma \FUN \U\\
    &\EL : \FIB_0\,\U\qquad
    &&\ELCODE : \{\Gamma:\SET\}(\Phi : \FIB_0\,\Gamma) \FUN
    {\EL [\CODE\,\Phi] \EQ \Phi}
  \end{aligned}
\end{equation}
where for simplicity we restrict attention to fibrations whose fibers
are in the lowest universe, $\SET=\SET_0$. Here $\U$ is the
universe\footnote{Our predicative treatment of cofibrant types makes
  it necessary to place $\U$ in $\SET_2$ rather than $\SET_1$.} and
$\EL$ is a CCHM fibration over it which is a weak classifier in the
sense that any fibration $\Phi: \FIB_0\,\Gamma$ can be obtained from
it (up to equality) by re-indexing along some function
$\CODE\,\Phi:\Gamma\FUN\U$. (The word ``weak'' refers to the fact that
we do not require there to be a \emph{unique} function
$\gamma:\Gamma\FUN\U$ with $\EL [\gamma] \EQ \Phi$.)

We will show that the data in~\eqref{eq:9} implies that the interval
must be trivial ($\O\EQ\I$), contradicting the assumption in
\eqref{eq:3}.  This is because \eqref{eq:9} allows one to deduce that
if a family of types $A:\Gamma\FUN\SET$ has the property that each
$A\,x$ has a fibration structure when regarded as a family over the
unit type $\UNIT$, then there is a fibration structure for the whole
family $A$; and yet there are families where this cannot be the case.
For example, consider the family $P:\INT\FUN\SET$ with
$P\,i = (\O\EQ i)$.  For each $i:\INT$, the type $P\,i$ has a
fibration structure
$\pi\,i : \ISFIB\,\UNIT\,(\lambda\,\HOLE\FUN P\,i)$, because of
uniqueness of identity proofs~\eqref{eq:2}.  But the family as a whole
satisfies $\ISFIB\,\INT\,P \FUN\EMPTY$, because if we had a fibration
structure $\alpha: \ISFIB\,\INT\,P$, then we could apply it to
\begin{align*}
  &\ID :\PATH\,\INT
  &&\varphi : \SET
  &&u : \COF\,\varphi
  &&p : (i:\INT)\FUN\varphi \FUN P\,i
  &&z : \SIGMA{a_0}{P\,\O} p\,\O\EXT a_0\\
  &\ID\,i = i
  &&\varphi = \EMPTY
  &&u = \COFEMPTY
  &&p\,i = \lambda\,\HOLE\FUN \EMPTYELIM
  &&z = \PAIR{\REFL}{\lambda\,\HOLE\FUN\UIP}
\end{align*}
(where $\EMPTYELIM : \{A : \SET\}\FUN\EMPTY\FUN A$ is the elimination
function for the empty type) to get
$\alpha\,\ID\,\varphi\,u\,p\,z : (\SIGMA{a_1}{P\,\I} p\,\I\EXT a_1)$
and hence
$\INONTRIV\,(\FST\,(\alpha\,\ID\,\varphi\,u\,p\,z)): \EMPTY$.  From
this we deduce the following ``no-go''~\footnote{We are stealing
  Shulman's terminology~\cite[section~4.1]{ShulmanM:brofpt}.} theorem
for internal universes of CCHM fibrations.

\begin{theorem}{\normalfont\cite[Remark~7.5]{PittsAM:aximct-jv}}
  \label{thm:nogo}
  The existence of types and functions as in \eqref{eq:9} for CCHM
  fibrations is contradictory. More precisely, if $\INTUNIV :\SET_3$
  is the dependent record type with fields $\U$, $\EL$, $\CODE$ and
  $\ELCODE$ as in \eqref{eq:9}, then there is a term of type
  $\INTUNIV\FUN\EMPTY$.
\end{theorem}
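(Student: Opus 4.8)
The plan is to assume a term of the record type $\INTUNIV$, unpack it as data $\U$, $\EL$, $\CODE$, $\ELCODE$ exactly as in \eqref{eq:9}, and manufacture from it an inhabitant of $\EMPTY$; abstracting over the assumed term then gives the desired function $\INTUNIV\FUN\EMPTY$. The engine of the argument is the one-variable family $P:\INT\FUN\SET$ with $P\,i = (\O\EQ i)$: I will show that the weak classifier forces $P$ to carry a fibration structure over $\INT$, and then exhibit a CCHM filling problem for that structure whose solution produces a proof of $\O\EQ\I$, contradicting $\INONTRIV$ from \eqref{eq:3}.

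First I would note that for each fixed $i:\INT$ the constant family $\lambda\,\HOLE\FUN P\,i:\UNIT\FUN\SET$ carries a fibration structure $\pi\,i:\ISFIB\,\UNIT\,(\lambda\,\HOLE\FUN P\,i)$: a composition structure in $\COMP\,(\lambda\,\HOLE\FUN P\,i)$ may send a cofibrant partial path $p$ together with an extension $a_0$ of $p\,\O$ to the underlying element of $a_0$ (which already has type $P\,i$), paired with the extension of $p\,\I$ supplied by uniqueness of identity proofs \eqref{eq:2}, since $P\,i$ has at most one element. Applying $\CODE$ to the fibration $\PAIR{\lambda\,\HOLE\FUN P\,i}{\pi\,i}:\FIB_0\,\UNIT$ at $\UNITVAL$ yields a point of $\U$, hence a map $c:\INT\FUN\U$ with $c\,i = \CODE\,\PAIR{\lambda\,\HOLE\FUN P\,i}{\pi\,i}\,\UNITVAL$. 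Re-indexing $\EL$ along $c$ gives a fibration $\EL[c]:\FIB_0\,\INT$; write $A := \FST\,(\EL[c])$ and $\alpha := \SND\,(\EL[c])$. Using functoriality of re-indexing (it is composition of dependent functions) and $\eta$ for the unit type, re-indexing $\EL[c]$ further along the constant map $\UNIT\FUN\INT$ at $i$ agrees with $\EL[\CODE\,\PAIR{\lambda\,\HOLE\FUN P\,i}{\pi\,i}]$, which by $\ELCODE$ equals $\PAIR{\lambda\,\HOLE\FUN P\,i}{\pi\,i}$; comparing first components and evaluating at $\UNITVAL$ gives $A\,i\EQ P\,i$ for every $i:\INT$.

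Next, $\FUNEXT$ \eqref{eq:1} promotes the pointwise equalities $A\,i\EQ P\,i$ to an equality $A\EQ P$, and transporting $\alpha$ along it produces $\alpha':\ISFIB\,\INT\,P$. Finally I would instantiate $\alpha'$ exactly as in the displayed computation preceding the theorem: feed it $\ID:\PATH\,\INT$, the cofibration $\varphi=\EMPTY$ with witness $\COFEMPTY$, the partial path $p\,i = \lambda\,\HOLE\FUN\EMPTYELIM$, and the extension datum $z = \PAIR{\REFL}{\lambda\,\HOLE\FUN\UIP}$ of $p\,\O$ at $\O$. The resulting term of $\SIGMA{a_1}{P\,\I}p\,\I\EXT a_1$ has first component a proof of $P\,\I = (\O\EQ\I)$, and composing with $\INONTRIV$ yields $\EMPTY$.

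The one genuinely delicate point is the transport of the fibration structure from $A$ to $P$: one must check that the identification $A\EQ P$ obtained from $\FUNEXT$ is the one along which $\alpha$ is to be transported, and that nothing is lost in the transport. Since the ambient type theory validates both $\FUNEXT$ and $\UIP$, every such identification is unique and the transport is harmless, so this reduces to bookkeeping. Everything else is the reasoning already carried out in the paragraph before the statement, together with the elementary functoriality and $\eta$ facts about re-indexing invoked above.
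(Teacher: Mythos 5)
Your proposal is correct and follows essentially the same route as the paper's proof: the family $P\,i=(\O\EQ i)$ with pointwise structures $\pi\,i$ from $\UIP$, the map $i\mapsto\CODE\,\PAIR{(\lambda\,\HOLE\FUN P\,i)}{\pi\,i}\,\UNITVAL$ into $\U$, the use of $\ELCODE$ plus $\FUNEXT$ to identify the first component of the re-indexed fibration with $P$, and the instantiation at $\varphi=\EMPTY$ to extract $\O\EQ\I$. Your detour through functoriality of re-indexing along the constant map $\UNIT\FUN\INT$ is just a more verbose rendering of the paper's single $\CONG\,(\lambda\,x\FUN\FST\,x\,\UNITVAL)$ step, and your remark that $\UIP$ makes the transport of the fibration structure harmless matches the paper's implicit bookkeeping.
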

\begin{proof}\hspace*{-4pt}\footnote{See the file
    \texttt{theorem-3-1.agda} at
    \url{https://doi.org/10.17863/CAM.22369} for an Agda version of
    this proof.}  Suppose we have an element of $\INTUNIV$ and hence
  functions as in \eqref{eq:9}. Then taking $P$ to be
  $\lambda i\FUN (\O\EQ i)$ and using the family $\pi\,i$ of fibration
  structures on each type $P\,i$ mentioned above, we get:
  \begin{equation}
    \label{eq:12}
    \Phi : \FIB_0\,\INT
    \qquad
    \Phi  =
    \EL[(\lambda\,i\FUN\CODE\,\PAIR{(\lambda\,\HOLE\FUN
      P\,i)}{\pi\,i}\,\UNITVAL)]
  \end{equation}
  Using $\ELCODE$ and function extensionality~\eqref{eq:1}, it follows
  that there is a proof $u : \FST\,\Phi \EQ P$, namely
  $u = \FUNEXT\,(\lambda\,i \FUN \CONG\,(\lambda\,x \FUN
  \FST\,x\,\UNITVAL)\, (\ELCODE\,\PAIR{(\lambda\,\HOLE\FUN
    P\,i)}{\pi\,i}))$, where $\CONG$ is the usual congruence property of
  equality.
  From that and $\SND\,\Phi$ we get an element of
  $\ISFIB\,\INT\,P$. But we saw above how to transform such an element
  into a proof of $\EMPTY$. So altogether we have a proof of
  $\INTUNIV\FUN\EMPTY$. 
\end{proof}

\begin{rem}\normalfont 
  This counterexample generalizes to other notions of fibration: it is
  not usually the case that any type family $A : \Gamma \to \SET$ for
  which $A\,x$ is fibrant over $\UNIT$ for all $x : \Gamma$, is
  fibrant over $\Gamma$. The above proof should be compared with the
  proof that there is no ``fibrant replacement'' type-former in
  \emph{Homotopy Type System} (HTS); see
  \url{https://ncatlab.org/homotopytypetheory/show/Homotopy+Type+System#fibrant_replacement}.
  Theorem~\ref{thm:no-internal-tinyness} 
  below provides a further example of a global construct that does not
  internalize.
\end{rem}

\section{Crisp type theory}
\label{sec:loctt}

The proof of Theorem~\ref{thm:nogo} depends upon the fact that in the
internal language the $\CODE$ function can be applied to elements with
free variables.  In this case it is the variable $i:\INT$ in
$\CODE\,\PAIR{(\lambda\,\HOLE\FUN P\,i)}{\pi\,i}\,\UNITVAL$; by
abstracting over it we get a function $\INT\FUN\U$ and re-indexing
$\EL$ along this function gives the offending fibration
\eqref{eq:12}. Nevertheless, the cubical sets presheaf topos does
contain a (univalent) universe which is a CCHM fibration classifier,
but only in an \emph{external} sense. Thus there is an object $\U$ in
$\Cset$ and a global section $\EL:1\morphism \FIB_0\,\U$ with the
property that for any object $\Gamma$ and morphism
$\Phi:1\morphism \FIB_0\,\Gamma$, there is a morphism
$\CODE\,\Phi: \Gamma\morphism \U$ so that $\Phi$ is equal to the
composition
${\FIB_0\,(\CODE\,\Phi)} \comp \EL : 1 \morphism \FIB_0\,\Gamma$;
see~\cite[Definition~18]{CoquandT:cubttc} for a concrete description
of $\U$.  The internalization of this property replaces the use of
global elements $1\morphism \Gamma$ of an object by local elements,
that is, morphisms $X\morphism \Gamma$ where $X$ ranges over a
suitable collection of generating objects (for example, the
representable objects in a presheaf topos); and we have
seen that such an internalized version cannot exist.

Nevertheless, we would like to explain the construction of universes
like $\U\in\Cset$ using some kind of type-theoretic language that
builds on Section~\ref{sec:intdcf}.  So we seek a way of manipulating
global elements of an object $\Gamma$, within the internal
language. One cannot do so simply by quantifying over elements of the
type $\UNIT\FUN\Gamma$, because of the isomorphism
$\Gamma \bij (\UNIT\FUN \Gamma)$.  Instead, we pass to a modal type
theory that can speak about global elements, which we call \emph{crisp
  type theory}. Its judgements, such as
$\Delta \mid \Gamma \vdash a : A$, have two context zones---where
$\Delta$ represents global elements and $\Gamma$ the usual, local
ones. The context structure is that used for an S4 necessitation
modality~\cite{PfenningF:judrml,DePaivaV:fibmtt,ShulmanM:brofpt},
because a global element from $\Delta$ can be used locally, but global
elements cannot depend on local variables from $\Gamma$.
Following~\cite{ShulmanM:brofpt}, we say that the left-hand context
$\Delta$ contains \emph{crisp} hypotheses about the types of
variables, written $x :: A$.

The interpretation of crisp type theory in cubical sets makes use of
the comonad $\flat:\Cset\morphism \Cset$ that sends a presheaf $A$ to
the constant presheaf on the set of global sections of $A$; thus
$\flat A(X) \cong A(1)$ for all $X\in\Cubes$ (where $1\in\Cubes$ is
terminal). Then a judgement $\Delta \mid \Gamma \vdash a : A$
describes the situation where $\Delta$ is a presheaf, $\Gamma$ is a
family of presheaves over $\flat\Delta$, $A$ is a family over
$\Sigma(\flat\Delta)\,\Gamma$ and $a$ is an element of that
family. The rules of crisp type theory are designed to be sound
for this interpretation. Compared with ordinary type theory, the key
constraint is that \emph{types in the crisp context and terms
  substituted for crisp variables depend only on crisp variables}. The
crisp variable and (admissible) substitution rules are:
\begin{equation}
  \label{eq:10}
  \inferrule{\ }{\Delta,x::A,\Delta' \mid \Gamma\vdash x:A}
  \qquad
  \inferrule{\Delta\mid \Ec\vdash a:A \\
    \Delta, x :: A, \Delta' \mid \Gamma \vdash  b : B}
  {\Delta, \Delta'[a/x] \mid  \Gamma[a/x] \vdash b[a/x] : B[a/x]}
\end{equation}
The semantics of the variable rule, which says that global elements
can be used locally, uses the counit
$\varepsilon\,A:\flat A\morphism A$ of the comonad $\flat$ mentioned
above.  In the substitution rule, $\Ec$ stands for the empty list, so
$a$ and $A$ may only depend upon the crisp variables from $\Delta$.
The other rules of crisp type theory (those for $\Pi$ types, $\Sigma$
types, etc.) carry the crisp context along.  For our application we
do not need a type-former for $\flat$, but instead make use of crisp
$\Pi$ types (see, e.g.~\cite{DePaivaV:fibmtt,PientkaB:conmtt}), that
is, $\Pi$ types whose domain is crisp
\begin{equation}
  \label{eq:11}
  \inferrule{\Delta\mid\Ec\vdash A:\SET_m\\\\
             \Delta,x::A\mid\Gamma\vdash B:\SET_n}
            {\Delta\mid\Gamma\vdash (x::A)\FUN B : \SET_{m\sqcup n}}
  \quad
  \inferrule{\Delta,x::A\mid \Gamma \vdash b : B}
            {\Delta\mid \Gamma \vdash \lambda x::A.b : (x::A)\FUN B }
  \quad
  \inferrule{\Delta \mid \Gamma \vdash f : (x::A)\FUN B \\\\
             \Delta \mid \Ec \vdash a: A}
            {\Delta \mid \Gamma \vdash f\,a : B[a/x]}  
\end{equation}
with $\beta\eta$ judgemental equalities.  In these rules, because the
argument variable $x$ is crisp, its type $A$, and the term $a$ to
which the function $f$ is applied, must also be crisp.
We also use \emph{crisp induction} for identity
types~\cite{ShulmanM:brofpt}---identity
elimination with a family $y::A,p::x\EQ y \vdash C(y,p)$ whose
parameters are crisp variables, which is given by a term of type
\begin{equation}
  \label{eq:13}
  \{A :: \SET_n\}\{x :: A\}(C : (y :: A) (p :: x\EQ y) \FUN\SET_n)(z :
  C\,x\,\REFL) (y :: A) (p :: x\EQ y) \FUN C\,y\, p
\end{equation}
together with a $\beta$ judgemental equality.

\begin{rem}[\textbf{Presheaf models of crisp type theory}]\normalfont
  \label{rem:local-topos}
  Crisp type theory is motivated by the specific presheaf topos
  $\Cset$ from Definition~\ref{def:cset}. However, it seems that very
  little is required of a category $\cat$ for the presheaf topos
  $\widehat{\cat}$ to soundly interpret it using the comonad
  $\flat=p^*\comp p_*$, where $p_*$ takes the global sections of a
  presheaf and its left adjoint $p^*$ sends sets to constant
  presheaves.  This $\flat$ preserves finite limits (because it is the
  composition of functors with left adjoints---$p^*$ is isomorphic to
  the functor given by precomposition with $\cat\morphism 1$ and hence
  has a left adjoint given by left Kan extension along
  $\cat\morphism 1$).  Although the details remain to be worked out,
  it appears that to model crisp type theory with crisp $\Pi$ types
  and crisp identification induction (and moreover a $\flat$ modality
  with crisp $\flat$ induction, which we do not use here), the only
  additional condition needed is that this comonad is idempotent
  (meaning that the comultiplication
  $\delta:\flat\morphism \flat\comp\flat$ is an isomorphism).  This
  idempotence holds iff $\widehat{\cat}$ is a connected topos, which
  is the case iff $\cat$ is a connected category---for example, when
  $\cat$ has a terminal object.  If it does have a terminal object,
  then $\widehat{\cat}$ is a local
  topos~\cite[Sect.~C3.6]{JohnstonePT:skeett} and $\flat$ has a right
  adjoint; in which case,
  conjecturally~\cite[Remark~7.5]{ShulmanM:brofpt}, one gets a model
  of the whole of Shulman's spatial type theory, of which crisp type
  theory is a part.  In fact $\Cubes$ does not just have a terminal
  object, it has all finite products (as does any Lawvere theory) and
  from this it follows that $\Cset$ is not just local, but also
  cohesive~\cite{LawvereFW:axic}.
\end{rem}

\begin{rem}[\textbf{Agda-flat}]\normalfont
  \label{rem:agda-flat}
  Vezzosi has created a fork of Agda, called
  \emph{Agda-flat}~\cite{Agda-flat}, which allows us to explore crisp
  type theory. It adds the ability to use crisp variables\footnote{The
    Agda-flat concrete syntax for ``$x::A$'' is
    ``$x \mathrel{{:}\{\flat\}} A$''.} $x :: A$ in places where
  ordinary variables $x:A$ may occur in Agda, and checks the modal
  restrictions in the above rules.  For example, Agda-flat quite
  correctly rejects the following attempted application of a
  crisp-$\Pi$ function to an ordinary argument
  \[
    \mathsf{wrong} : (A :: \SET_n)(B : \SET_m) (f : (\_ :: A) \FUN B)
    (x : A) \FUN B \qquad \mathsf{wrong}\,A\,B\,f\,x = f(x)
  \]
  while the variant with $x::A$ succeeds.  This is a simple example of
  keeping to the modal discipline that crisp type theory imposes; for
  more complicated cases, such as occur in the proof of
  Theorem~\ref{thm:universe} below, we have found Agda-flat
  indispensable for avoiding errors. However, Agda-flat implements a
  superset of crisp type theory and more work is needed to understand
  their precise relationship. For example, Agda's ability to define
  inductive types leads to new types in Agda-flat, such as the $\flat$
  modality itself; and its pattern-matching facilities allow one to
  prove properties of $\flat$ that go beyond crisp type theory. Agda
  allows one to switch off pattern-matching in a module; to be safe we
  do that as far as possible in our development. Installation
  instructions for Agda-flat can be found at
  \url{https://doi.org/10.17863/CAM.22369}.
\end{rem}

\section{Universes from tiny intervals}
\label{sec:tini}

In crisp type theory, to avoid the inconsistency in the ``no-go''
Theorem~\ref{thm:nogo}, we can weaken the definition of a universe in
\eqref{eq:9} by taking $\CODE$ and $\ELCODE$ to be crisp functions of
fibrations $\Phi$ (and implicitly, of the base type $\Gamma$ of the
fibration). For if $\CODE$ has type
$\{\Gamma :: \SET\}(\Phi :: \FIB_0\,\Gamma)(x : \Gamma) \FUN \U$, then
the proof of a contradiction is blocked when in \eqref{eq:12} we try
to apply $\CODE$ to $\Phi =\PAIR{(\lambda\,\HOLE\FUN P\,i)}{\pi\,i}$,
which depends upon the local variable $i:\INT$. Indeed we show in this
section that given an extra assumption about the interval type $\INT$
that holds for cubical sets, it is possible to define a universe with
such crisp coding functions which moreover are unique, so that one
gets a classifying fibration, rather than just a weakly classifying
one.

Recall from Definition~\ref{def:cset} that in the cubical sets model,
the type $\INT$ denotes the representable presheaf $\y\Int\in\Cset$ on
the object $\Int\in\Cubes$. Since $\Cubes$ has finite products, there
is a functor $\HOLE\times\Int:\Cubes\morphism\Cubes$. Pre-composition
with this functor induces an endofunctor on presheaves
$(\HOLE\times\Int)^*:\Cset\morphism\Cset$ which has left and right
adjoints, given by left and right Kan
extension~\cite[Chap.~X]{MacLaneS:catwm} along
$\HOLE\times\Int$. Hence by the Yoneda~Lemma, for any $F\in\Cset$ and
$X\in\Cubes$
\[
(\INT\FUN F)\,X \bij \Cset(\y X, \INT\FUN F) \bij \Cset(\y X
\times \y\Int, F) \bij \Cset(\y(X\times\Int),F) =
((\HOLE\times\Int)^*F)\,X 
\]
naturally in both $X$ and $F$. It follows that the exponential functor
$\PATH = \INT\FUN\HOLE:\Cset\morphism\Cset$ is naturally isomorphic to
$(\HOLE\times\Int)^*$ and hence not only has a left adjoint
(corresponding to product with $\INT$) but also a right adjoint.  The
significance of objects in a category with finite products that are
not only exponentiable (product with them has a right adjoint), but
also whose exponential functor has a right adjoint was first pointed
out by Lawvere in the context of synthetic differential
geometry~\cite{LawvereFW:towdst}. He called such objects ``atomic'',
but we will follow later usage~\cite{YetterD:rigaef} and call them
\emph{tiny}.\footnote{Warning: the adjective ``tiny'' is sometimes
  used to describe an object $X$ of a $\mathcal{V}$-enriched
  cocomplete category $\mathcal{C}$ for which the hom
  $\mathcal{V}$-functor
  $\mathcal{C}(X,\HOLE):\mathcal{C}\morphism\mathcal{V}$ preserves
  colimits; see~\cite{SattlerC:equepm} for example. We prefer Kelly's
  term \emph{small-projective object} for this property. In the
  special case that $\mathcal{V}=\mathcal{C}$ and $\mathcal{C}$ is
  cartesian closed and has sufficient properties for there to be an
  adjoint functor theorem, then a small-projective object is in
  particular a tiny one in the sense we use here.}  Thus the interval
in cubical sets is tiny and we have a right adjoint to the path
functor $\PATH$ that we denote by $\COP:\Cset\morphism\Cset$. So for
each $B\in\Cset$, the functor
$\Cset(\PATH\,\HOLE,B) : \Cset\morphism\Set$ is representable by
$\COP B$, that is, there are bijections
$\Cset(\PATH\,A,B) \bij \Cset(A,\COP B)$, natural in $A$.

Given $\Gamma$ and $A:\Gamma\FUN\SET$ in $\Cset$, from
Definition~\ref{def:cchmf} we have that fibration structures
$1\morphism \ISFIB\,\Gamma\,A$ correspond to sections of
$\FST:(\SIGMA{p}{\PATH\,\Gamma} \C(A\comp p))\morphism \PATH\,\Gamma$
and hence, transposing across the adjunction $\PATH\dashv\COP$, to
morphisms making the outer square commute in the right-hand diagram
below:
\[
  \xymatrix{{\PATH\,\Gamma}\ar@{..>}[r] \ar[dr]_{\ID} &
    {\SIGMA{p}{\PATH\,\Gamma}\C(A\comp p)} \ar[d]^{\FST}\\
    &{\PATH\,\Gamma}}
  \;\;
  \xymatrix{
    \Gamma \ar@{..>}@/^1.2pc/[rrr] \ar@{..>}[r] \ar[dr]_{\ID} &
    {R_\Gamma A} \pullbackcorner \ar[rr]^<<<<<<<<{\pi_2} \ar[d]^{\pi_1} &&
    {\COP(\SIGMA{p}{\PATH\,\Gamma}\C(A\comp p))}
    \ar[d]^{\COP\FST}\\
    & \Gamma \ar[rr]_<<<<<<<<<<{\eta_\Gamma} &&
    {\COP(\PATH\,\Gamma)}}
\]
We therefore have that fibration structures for $A$ correspond to
sections of the pullback $\pi_1:R_\Gamma A \morphism \Gamma$ of
$\COP\FST$ along the unit
$\eta_\Gamma:\Gamma\morphism \COP(\PATH\,\Gamma)$ of the adjunction at
$\Gamma$ (which is the adjoint transpose of
$\ID:\PATH\,\Gamma\morphism\PATH\,\Gamma$).  This characterization of
fibration structure does not depend on the particular definition of
$\C$, so should apply to many notions of fibration. We will show how
it leads to the construction of a universe $\U= R_{\SET} \ID$ and
family $\pi_1: R_{\SET} \ID \morphism \SET$ which is a classifier for
fibrations.  However, there are two problems that have to be solved in
order to carry out the construction within type theory:
\begin{itemize}
  
\item First, for $\ELCODE$ in \eqref{eq:9} to be an equality (rather
  than just an isomorphism), one needs the choice of $R_\Gamma A$ to
  be strictly functorial with respect to re-indexing along $\Gamma$
  (and hence to be a dependent right adjoint in the sense of
  \cite{CloustonR:moddtt}). 
  
\item Secondly, one cannot use ordinary type theory as the internal
  language to formulate the construction, because the right adjoint to
  $\PATH$ does not internalize, as the following theorem shows.
\end{itemize}

\begin{theorem}
  \label{thm:no-internal-tinyness}
  There is no \emph{internal} right adjoint to the path functor
  $\PATH:\Cset\morphism\Cset$ for cubical sets. In other words, there
  is no family of natural isomorphisms
  $(\PATH\,\HOLE\FUN B) \bij (\HOLE\FUN\COP B): \Cset\morphism\Cset$
  (for $B\in\Cset$).
\end{theorem}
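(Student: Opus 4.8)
The plan is to derive a contradiction from a hypothetical internal right adjoint by showing that it would make $\PATH$ isomorphic to the identity functor, something that already fails on the object $\INT$ itself. So suppose we are given an internal right adjoint $\COP$ to $\PATH$, that is, isomorphisms of the internal-hom \emph{objects} $(\PATH A\FUN B)\bij(A\FUN\COP B)$ in $\Cset$, natural in both $A$ and $B$. The crucial feature of ``internal'' here is that this compares the exponential objects themselves, whereas the genuine external adjunction $\PATH\dashv\COP$ of Section~\ref{sec:tini} supplies only a natural bijection of the hom-\emph{sets} $\Cset(\PATH A,B)\bij\Cset(A,\COP B)$.

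First I would dispose of $\COP$. Instantiating the isomorphism at $A:=\UNIT$ and using that $\PATH\UNIT=\INT\FUN\UNIT$ is isomorphic to $\UNIT$ (the terminal object raised to any power is terminal), the left-hand side becomes $(\UNIT\FUN B)\bij B$ and the right-hand side $(\UNIT\FUN\COP B)\bij\COP B$, so $\COP B\bij B$, naturally in $B$. Substituting this back, the hypothesised data degenerates to isomorphisms $(\PATH A\FUN B)\bij(A\FUN B)$ in $\Cset$, natural in both $A$ and $B$. It is exactly here that the \emph{internal} nature of the adjunction is used: the external adjunction at $A:=\UNIT$ would give only $\Cset(\UNIT,B)\bij\Cset(\UNIT,\COP B)$, which says nothing about $B$ versus $\COP B$.

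Next I would invoke the Yoneda lemma. Applying the global-sections functor $\Cset(1,-)$ to the isomorphisms $(\PATH A\FUN B)\bij(A\FUN B)$ and using the exponential adjunction $\Cset(1,(\PATH A\FUN B))\cong\Cset(\PATH A,B)$ gives a natural isomorphism $\Cset(\PATH A,-)\bij\Cset(A,-)$ of functors $\Cset\morphism\Set$, whence $\PATH A\bij A$ in $\Cset$ by the Yoneda lemma, for every $A$. Taking $A:=\INT$ produces an isomorphism $\INT\FUN\INT\bij\INT$ in $\Cset$, which I claim is absurd.

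Finally I would refute $\INT\FUN\INT\bij\INT$ by counting global sections, which any isomorphism of presheaves preserves. By Definition~\ref{def:cset}, $\INT=\y\Int$ has exactly two global sections, $\O$ and $\I$. But $\INT\FUN\INT$ has at least three: the identity path and the constant paths at $\O$ and at $\I$, which are pairwise distinct because evaluating at $\O$ and at $\I$ separates them, using $\O\neq\I$ (axiom~\eqref{eq:3}). Indeed $\Cset(1,\INT\FUN\INT)\cong\Cset(\INT,\INT)\cong\Cubes(\Int,\Int)$, the set of De~Morgan-algebra endomorphisms of the free algebra on one generator, which corresponds bijectively to the underlying set of that algebra, and that set has at least three elements (the generator together with the two constants). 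This contradiction proves the theorem. The step I expect to require the most care is the reduction in the second paragraph: it relies on the isomorphism of internal hom-objects being natural in $B$, not merely a pointwise isomorphism, and on this being strictly stronger than the external adjunction — keeping to exactly that strength, and no more, is what prevents the argument from appearing to contradict the genuine existence of $\COP$ as an external right adjoint.
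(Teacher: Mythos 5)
Your proof is correct and follows essentially the same route as the paper's: evaluate the internal isomorphism at $A=\UNIT$ using $\PATH\,\UNIT\bij\UNIT$ to force $\COP\bij\mathrm{Id}$ and hence $\PATH\bij\mathrm{Id}$, then refute this by counting global elements. The only (cosmetic) difference is in the endgame: the paper uses the internal Yoneda lemma to conclude $\INT\bij\UNIT$ and compares two global elements against one, whereas you conclude $\PATH\,\INT\bij\INT$ and compare at least three global sections of $\INT\FUN\INT$ against the two of $\INT$.
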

\begin{proof}
  It is an elementary fact about adjoint functors that such a family
  of natural isomorphisms is also natural in $B$. Note that
  $\PATH\UNIT\bij \UNIT$. So if we had such a family, then we would
  also have isomorphisms
  $B \bij (\UNIT\FUN B) \bij (\PATH\,\UNIT\FUN B) \bij (\UNIT\FUN\COP
  B) \bij \COP B$ which are natural in $B$. Therefore $\COP$ would be
  isomorphic to the identity functor and hence so would be its left
  adjoint $\PATH$. Hence $\INT\FUN\HOLE$ and $\UNIT\FUN\HOLE$ would be
  isomorphic functors $\Cset\morphism \Cset$, which implies (by the
  internal Yoneda Lemma) that $\INT$ is isomorphic to the terminal
  object $\UNIT$, contradicting the fact that $\INT$ has two distinct
  global elements.
\end{proof}

\begin{figure}
    \centering
  \begin{align*}
    &\COP : (A :: \SET_n) \FUN \SET_n\\
    &\R : \{A :: \SET_n\}\{B ::
      \SET_m\}(f :: \PATH\,A\FUN B) \FUN A \FUN  \COP B\\
    &\L : \{A :: \SET_n\}\{B :: \SET_m\}(g :: A\FUN \COP B) \FUN
      \PATH\,A \FUN B\\ 
    &\L\R : \{A :: \SET_n\}\{B :: \SET_m\}\{f ::
      \PATH\,A \FUN B\} \FUN \L (\R\,f) \EQ f\\
    &\R\L : \{A :: \SET_n\}\{B :: \SET_m\}\{g ::
      A \FUN \COP B\} \FUN \R (\L\,g) \EQ g\\
    &\R\PATH : \{A :: \SET_n\}\{B ::
       \SET_m\}\{C :: \SET_k\}(g :: A \FUN B)(f :: \PATH\,B \FUN  C)
      \FUN  \R (f \comp \PATH\ACT g) \EQ \R f \comp g     
  \end{align*}
  \vspace{-0.25in}
  \caption{Axioms for tinyness of the interval in crisp type theory}
  \label{fig:axitil}
\end{figure}

We will solve the first of the two problems mentioned above in the
same way that Voevodsky~\cite{VoevodskyV:csyduc} solves a similar
strictness problem (see also \cite[Section~6]{CloustonR:moddtt}):
apply $\COP$ once and for all to the displayed universe and then
re-index, rather than \emph{vice versa} (as done above).  The second
problem is solved by using the crisp type theory of the previous
section to make the right adjoint $\COP$ suitably global. The axioms
we use are given in Fig.~\ref{fig:axitil}. The function $\R$ gives the
operation for transposing (global) morphisms across the adjunction
$\PATH\dashv\COP$, with inverse $\L$ (the bijection being given by
$\R\L$ and $\L\R$); and $\R\PATH$ is the naturality of this
operation. The other properties of an adjunction follow from these, in
particular its functorial action
$\COP\ACT : \{A :: \SET_n\}\{B :: \SET_m\}(f :: A \FUN B) \FUN \COP A
\FUN \COP B$.  Note that Fig.~\ref{fig:axitil} assumes that the right
adjoint to $\INT\FUN(\HOLE)$ preserves universe levels. The soundness
of this for $\Cset$ relies on the fact that this adjoint is given by
right Kan extension~\cite[Chap.~X]{MacLaneS:catwm} along
$\HOLE\times\Int:\Cubes \morphism\Cubes$ and hence sends a presheaf
valued in the $n$th Grothendieck universe to another such.

\begin{theorem}[\textbf{Universe construction\protect\footnote{We just
      construct a universe for fibrations with fibers in $\SET_0$;
      similar universes $\U_n:\SET_{2\sqcup n}$ can be constructed for
      fibrations with fibers in $\SET_n$, for each $n$; see
      \texttt{theorem-5-2.agda} at
      \url{https://doi.org/10.17863/CAM.22369}.}}]
  \label{thm:universe}
  For fibrations as in Definition~\ref{def:cchmf} with any definition
  of composition structure $\C$ (e.g.~the CCHM one in \eqref{eq:22}),
  assuming axioms \eqref{eq:1}--\eqref{eq:4} and a tiny
  (Fig.~\ref{fig:axitil}) interval, there is a universe $\U$ equipped
  with a fibration $\EL$ which is \emph{classifying} in the sense that
  we have
  \begin{equation}
    \label{eq:20}
    \begin{array}{l}
      \U : \SET_2 \\
      \EL : \FIB_0\,\U \\
      \CODE :
      \{\Gamma :: \SET\}(\Phi :: \FIB_0\,\Gamma) \FUN \Gamma \FUN \U\\
      \ELCODE : \{\Gamma :: \SET\}(\Phi :: \FIB_0\,\Gamma) \FUN
      {\EL[\CODE\,\Phi] \EQ \Phi}\\
      \CODEEL : \{\Gamma :: \SET\}(\gamma :: \Gamma\FUN\U) \FUN
      \CODE(\EL[\gamma]) \EQ \gamma
    \end{array}
  \end{equation}
\end{theorem}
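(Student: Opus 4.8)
The plan is to follow Voevodsky's strictification device~\cite{VoevodskyV:csyduc} (see also \cite[Section~6]{CloustonR:moddtt}): apply the right adjoint $\COP$ \emph{once}, to the generic family $\ID:\SET\FUN\SET$, and recover every other fibration by ordinary re-indexing, rather than applying $\COP$ to each base separately as in the discussion preceding this theorem. Writing $S:=\SIGMA{p}{\PATH\,\SET}\C\,p$ with first projection $\FST:S\FUN\PATH\,\SET$ (using $\ID\comp p=p$), and $\eta_\SET:=\R\,\ID:\SET\FUN\COP(\PATH\,\SET)$ for the unit, I would set
\[
  \U \;:=\; \SIGMA{A}{\SET}\SIGMA{c}{\COP S}\,(\COP\ACT\FST\,c \EQ \eta_\SET\,A)
\]
(the pullback of $\COP\ACT\FST$ along $\eta_\SET$, which one checks lies in $\SET_2$), and take the family underlying $\EL$ to be the first projection $\U\FUN\SET$, also written $\FST$. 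Because $\COP$ preserves pullbacks and the unit is natural, this $R_{(-)}$-construction is a dependent right adjoint in the sense of~\cite{CloustonR:moddtt}: fibration structures on a family $A:\Gamma\FUN\SET$ correspond to crisp sections of the pullback of $\FST:\U\FUN\SET$ along $A$ --- and that pullback is \emph{strictly} functorial in $A$, so re-indexing needs no further strictification and $\ELCODE$ can be a genuine equality rather than a mere isomorphism. It remains to make this correspondence explicit.

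The data come directly from the $\R$/$\L$ transposition. For crisp $\Gamma$ and crisp $\Phi=(A,\alpha):\FIB_0\,\Gamma$, put $\hat\alpha:=\lambda r.\,(A\comp r,\,\alpha\,r):\PATH\,\Gamma\FUN S$ --- so that $\FST\comp\hat\alpha$ is definitionally $\PATH\ACT A$ --- and $c:=\R\,\hat\alpha:\Gamma\FUN\COP S$. Naturality of $\R$ in its codomain (derivable from the axioms of Fig.~\ref{fig:axitil}, as is the functorial action $\COP\ACT$) gives $\COP\ACT\FST\comp c\EQ\R(\FST\comp\hat\alpha)$, and $\R\PATH$ gives $\R(\PATH\ACT A)\EQ\eta_\SET\comp A$; hence $\COP\ACT\FST\comp c\EQ\eta_\SET\comp A$, whose fibrewise components (by $\FUNEXT$) furnish $w$ with $\COP\ACT\FST\,(c\,x)\EQ\eta_\SET\,(A\,x)$. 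Set $\CODE\,\Phi:=\lambda x.\,(A\,x,\,c\,x,\,w\,x):\Gamma\FUN\U$. Dually, the third projections of elements of $\U$ yield (by $\FUNEXT$) that $\COP\ACT\FST\comp(\FST\comp\SND)\EQ\eta_\SET\comp\FST$ for the map $\FST\comp\SND:\U\FUN\COP S$; combining $\R\L$, naturality, $\R\PATH$ and $\L\R$ gives $\FST\comp\L(\FST\comp\SND)\EQ\PATH\ACT\FST$, so $\rho\mapsto\SND(\L(\FST\comp\SND)\,\rho)$, transported along this equality, inhabits $\C(\FST\comp\rho)$; this defines $\EL_{\mathrm{fib}}\in\ISFIB\,\U\,\FST$, and $\EL:=(\FST,\EL_{\mathrm{fib}})$.

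For $\ELCODE$: $\FST\comp(\CODE\,\Phi)$ is definitionally $A$, so the family component is immediate; for the fibration-structure component, naturality of $\L$ gives $\L(\FST\comp\SND)\comp\PATH\ACT(\CODE\,\Phi)\EQ\L(\FST\comp\SND\comp(\CODE\,\Phi))$, and $\FST\comp\SND\comp(\CODE\,\Phi)$ is definitionally $c=\R\,\hat\alpha$, so by $\L\R$ this is $\EQ\hat\alpha$; taking $\SND$ and discharging the (by $\UIP$ trivial) transport recovers $\alpha$, whence $\EL[\CODE\,\Phi]\EQ\Phi$ by $\FUNEXT$. For $\CODEEL$, feed $\EL[\gamma]=(\FST\comp\gamma,\,\EL_{\mathrm{fib}}\comp\PATH\ACT\gamma)$ into $\CODE$: its family component is definitionally $\FST\comp\gamma$; its $\COP S$-component is $\R$ of $\lambda r.\,(\FST\comp\gamma\comp r,\,\EL_{\mathrm{fib}}(\gamma\comp r))$, which by naturality of $\L$ (with $\Sigma$-$\eta$ and $\FUNEXT$) equals $\R(\L(\FST\comp\SND\comp\gamma))$, hence $\EQ\FST\comp\SND\comp\gamma$ by $\R\L$ --- exactly the $\COP S$-component of $\gamma$; the remaining equality-valued component is forced by $\UIP$. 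Thus $\CODE(\EL[\gamma])\EQ\gamma$, with $\FUNEXT$ assembling the pointwise identities.

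I expect two main obstacles. The first is strictness: making $\ELCODE$ and $\CODEEL$ honest equalities, which is why $\COP$ is applied to the universe only once and re-indexing is done by strictly functorial pullback, and why the family components above are arranged to hold \emph{definitionally}. The second is the modal discipline: every use of $\R,\L,\R\L,\L\R,\R\PATH$, and every transport of an equality through one of these via $\CONG$, requires its arguments --- $\Gamma$, $\Phi$, $\gamma$, the derived maps $c$, $\hat\alpha$, $\FST\comp\SND$, $\CODE\,\Phi$, and the relevant equality proofs --- to be crisp, using crisp identity induction~\eqref{eq:13}; keeping strictly to this discipline (which, by Theorems~\ref{thm:nogo} and~\ref{thm:no-internal-tinyness}, is unavoidable) is the genuinely delicate part, and is exactly what Agda-flat was built to check. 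The residual transport and $\UIP$ bookkeeping is routine but lengthy.
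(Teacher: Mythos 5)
Your proof is correct and follows essentially the same route as the paper's: apply $\COP$ once to a generic display map, define $\U$ as a pullback over $\SET$, read off $\EL$ by transposing the pullback square with $\L$ (using $\R\PATH$ and $\L\R$ to identify the first component with $\PATH\ACT\pi_1$), and obtain $\CODE$, $\ELCODE$ and $\CODEEL$ from the triangle identities, naturality, and the universal property of the pullback. The only inessential difference is that you pull back $\COP\ACT\FST$ along the unit $\eta_{\SET}=\R\,\ID$, i.e.\ you take $\U=R_{\SET}\,\ID$ literally as in the paper's motivating discussion, whereas the paper's written proof uses the canonically isomorphic pullback of $\COP\ACT\PR_1:\COP\,\ELT_1\FUN\COP\,\SET_1$ along $\R\,\C$ (the two agree by pullback pasting, since $\COP$ preserves pullbacks), and verifies $\CODEEL$ via pullback uniqueness rather than your equivalent componentwise computation.
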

\begin{proof} Consider the display function associated with the first
  universe:
  \begin{align}
    &\ELT_1 : \SET_2
    &&\PR_1 : \ELT_1 \FUN \SET_1 \label{eq:21}\\
    &\ELT_1 = \SIGMA{A}{\SET_1} A
    &&\PR_1 (A , x) = A \notag    
  \end{align}
  We have $\C : \PATH\,\SET_0 \FUN \SET_1$ and
  hence  using the transpose operation from Fig.~\ref{fig:axitil},
  $\R\,\C:\SET_0\FUN\COP\,\SET_1$. We define $\U:\SET_2$ by taking
  a pullback:
  \begin{equation}
    \label{eq:23}
      \xymatrix{\U \pullbackcorner \ar[r]^{\pi_2} \ar[d]_{\pi_1} &
        \COP\,\ELT_1 \ar[d]^{\COP\ACT\PR_1} \\
        \SET \ar[r]_{\R\,\C} &
        {\COP\,\SET_1}}
    \qquad
    \begin{array}[t]{l}
      \U = \SIGMA{A}{\SET}(\SIGMA{B}{\COP\,\ELT_1} (\COP\ACT\PR_1\,B
      \EQ \R\,\C\,A))\\
      \pi_1 \PAIR{A}{\PAIR{\HOLE}{\HOLE}} = A\\
      \pi_2 \PAIR{\HOLE}{\PAIR{B}{\HOLE}} = B
    \end{array}
  \end{equation}
  Transposing this square across the adjunction $\PATH\dashv\COP$
  gives
  $\PR_1\comp \L\,\pi_2 = \C\comp \PATH\ACT\pi_1 : \PATH\,\U \FUN
  \SET_1$.  Considering the first and second components of
  $\L\,\pi_2 $, we have
  $\L\,\pi_2 \EQ
  \langle{\C\comp\PATH\ACT\pi_1}\mathbin{,}\upsilon\rangle$ for
  some $\upsilon : (p:\PATH\,\U)\FUN \C(\PATH\ACT\pi_1\,p)$; hence
  $\upsilon$ is an element of $\ISFIB\,U\,\pi_1$ and so we can define
  \begin{equation}
    \label{eq:25}
    \EL: \FIB_0\,\U
    \qquad \EL = (\pi_1,\upsilon)
  \end{equation}
  So it just remains to construct the functions in
  \eqref{eq:20}. Given $\Gamma :: \SET$ and
  $\Phi = (A,\alpha) ::\FIB_0\,\Gamma$, we have
  $\alpha :: \ISFIB\,\Gamma\,A = (p :\PATH\,\Gamma)\FUN
  \C(A\comp p)$. So the outer square in the left-hand diagram
  below commutes:
  \begin{equation}
    \label{eq:26}
      \xymatrix{
      {\PATH\,\Gamma} \ar@/_1pc/[ddr]_{\PATH\ACT\,A}
      \ar@/^1pc/[rrd]^{\langle \C\comp\PATH\ACT A \mathbin{,}
        \alpha\rangle}
      \ar@{..>}[dr] |{\PATH\ACT(\CODE\,\Phi)}\\
      & {\PATH\,\U} \ar[r]^{\L\,\pi_2} \ar[d]^{\PATH\ACT\pi_1}&
      \ELT_1 \ar[d]^{\PR_1}\\
      & {\PATH\,\SET} \ar[r]_{\C}&
      \SET_1}
    \qquad\qquad
    \xymatrix{
      {\Gamma} \ar@/_1pc/[ddr]_{A}
      \ar@/^1pc/[rrd]^{\R\,\langle \C\comp\PATH\ACT A \mathbin{,}
        \alpha\rangle} \ar@{..>}[dr] |{\CODE\,\Phi}
      \\
      & {\U} \pullbackcorner \ar[r]^{\pi_2} \ar[d]^{\pi_1}&
      {\COP\,\ELT_1} \ar[d]^{\COP\ACT \PR_1}\\
      & {\SET} \ar[r]_{\R\,\C}&
      {\COP\,\SET_1}}
  \end{equation}
  Transposing across the adjunction $\PATH\dashv\COP$, this means that
  the outer square in the right-hand diagram also commutes and
  therefore induces a function $\CODE\,\Phi:\Gamma\FUN \U$ to the
  pullback. So there are proofs of $\pi_1\comp\CODE\,\Phi \EQ A$ and
  $\pi_2\comp \CODE\,\Phi \EQ \R\, \langle \C\comp\PATH\ACT A
  \mathbin{,} \alpha\rangle$. Transposing the latter back across the
  adjunction gives a proof of
  $\L\,\pi_2\comp \PATH\ACT(\CODE\,\Phi) \EQ
  \langle\C\comp\PATH\ACT A\mathbin{,} \alpha\rangle$; and since
  $\L\,\pi_2 \EQ
  \langle{\C\comp\PATH\ACT\pi_1}\mathbin{,}\upsilon\rangle$, this
  in turn gives a proof of
  $\upsilon\comp \PATH\ACT(\CODE\,\Phi) \EQ \alpha$. Combining this
  with the proof of $\pi_1\comp\CODE\,\Phi \EQ A$, we get the desired
  element $\ELCODE\,\Phi$ of
  $\EL[\CODE\,\Phi] =
  \PAIR{\pi_1\comp\CODE\,\Phi}{\upsilon\comp\CODE\,\Phi} \EQ
  \PAIR{A}{\alpha} = \Phi$. Finally, taking $\Gamma=\U$ and $\Phi=\EL$
  in \eqref{eq:26}, the uniqueness property of the pullback implies
  that $\CODE\,\EL \EQ \ID$; and similarly, for any
  $\gamma::\Delta\FUN\Gamma$ we have that
  $(\CODE\,\Phi)\comp\gamma \EQ \CODE(\Phi[\gamma])$. Together these
  properties give us the desired element $\CODEEL$ of
  $\CODE(\EL[\gamma]) \EQ (\CODE\,\EL)\comp\gamma \EQ \ID\comp\gamma =
  \gamma$. 
\end{proof}

\begin{rem}\normalfont
  The above theorem can be generalized by replacing the particular
  universe $\ID:\SET\FUN\SET$ by an arbitrary one
  $E_0:U_0\FUN\SET$. So long as the composition structure $\C$ lands
  in $U_0$, one can use the above method to construct a universe of
  fibrant types from among the $U_0$ types.\footnote{See
    \texttt{theorem-5-2-relative.agda} at
    \url{https://doi.org/10.17863/CAM.22369}.} The application of this
  generalization we have in mind is to directed type theory; for
  example one can first construct the universe of fibrant types in the
  CCHM sense and then make a universe of covariant discrete fibrations
  in the Riehl-Shulman~\cite{ShulmanM:typtsc} sense from the fibrant
  types (repeating the construction with a different interval object).
\end{rem}

\begin{rem}\normalfont
  The results in this section only make use of the fact that the
  functor $\COP:\Cset\morphism\Cset$ is right adjoint to the
  exponential $\INT\FUN(\_)$ and we saw at the beginning of this
  section why such a right adjoint exists. It is possible to give an
  explicit description of presheaves of the form $\COP\Gamma$, but so
  far we have not found such a description to be useful.
\end{rem}

\section{Applications}
\label{sec:modhtt}
\begin{figure}
  \[
    \begin{array}{c}
      \begin{array}{lll}
        \HOLE\MIN\HOLE:\INT\FUN\INT\FUN\INT
        &\O{\MIN} : (i:\INT)\FUN \O\MIN i \EQ \O
        &{\MIN}\O : (i:\INT)\FUN i\MIN\O \EQ \O\\
        &\I{\MIN} : (i:\INT)\FUN \I\MIN i \EQ i
        &{\MIN}\I : (i:\INT)\FUN i \MIN\I \EQ i
        \\
        \REV:\INT\FUN\INT
        &\REV\REV: (i:\INT)\FUN \REV\,(\REV\,i) \EQ i
        &\REV\O : \REV\,\O \EQ \I
      \end{array}
      \\ 
      \begin{array}{l}
        \PROPCOF : (\varphi:\SET)(u\;v: \COF\,\varphi) \FUN u \EQ v
        \\
        \COFPROP : (\varphi:\SET)(\HOLE:\COF\,\varphi)(x\;y:\varphi) \FUN
        x\EQ y\\
        \COFEXT : (\varphi\;\psi:\SET)(\HOLE:\COF\,\varphi)
        (\HOLE:\COF\,\psi)(\HOLE:\varphi\FUN\psi)(\HOLE:\psi\FUN\varphi)
        \FUN {\varphi \EQ \psi}\\
         \COFO : (i:\INT) \FUN \COF\,(\O\EQ i) \\
         \COFI : (i:\INT) \FUN \COF\,(\I\EQ i)\\ 
        \COFOR : (\varphi\;\psi:\SET)(\HOLE:\COF\,\varphi)
        (\HOLE:\COF\,\psi) \FUN \COF\,(\varphi \OR \psi)\\
        \COFAND : (\varphi\;\psi:\SET)(\HOLE: \COF\,\varphi)(\HOLE:\varphi \FUN
        \COF\,\psi)
        \FUN \COF\,(\varphi \times \psi)\\
        \COF\forall\INT : (\varphi : \INT\FUN\SET)(\HOLE : (i :\INT)\FUN
        \COF\,(\varphi\,i)) \FUN \COF\,((i:\INT)\FUN \varphi\,i)\\
        \STRAX:
        \begin{array}[t]{@{}r}
          (\varphi{:}\SET)(\HOLE{:}\COF\,\varphi)(A:\SET_n) (t
          :\varphi\FUN(\SIGMA{B}{\SET_n}{A\bij B})) \FUN \hspace{2cm}\mbox{}\\
          \SIGMA{T}{(\SIGMA{B}{\SET_n}{A\bij B})}{t\EXT T}
        \end{array}
      \end{array}
    \end{array}
  \]
  \caption{Further axioms needed for the CCHM model}
  \label{fig:furact}
  \label{fig:furai}
\end{figure}

\subparagraph*{Models.} Theorem~\ref{thm:universe} 
is the missing piece that allows a completely internal
development of a model of univalent foundations based upon the CCHM
notion of fibration, albeit internal to crisp type theory rather than
ordinary type theory. One can define a CwF in crisp type theory whose
objects are crisp types $\Gamma::\SET_2$, whose morphisms are crisp
functions $\gamma::\Gamma'\FUN\Gamma$, whose families are crisp CCHM
fibrations $\Phi=(A,\alpha)::\FIB_0\,\Gamma$ and whose elements are
crisp dependent functions $f::(x:\Gamma)\FUN A\,x$. To see that this
gives a model of
univalent foundations one needs to prove:\\
(a) The CwF is a model of intensional type theory with $\Pi$-types and
inductive types ($\Sigma$-types, identity types,
booleans, $W$-types, \ldots).\\
(b) The type $\U::\SET_2$ constructed in Theorem~\ref{thm:universe} is
fibrant (as a family over the unit type).\\
(c) The classifying fibration $\Phi::\FIB_0\,\U$ satisfies the univalence
axiom in this CwF.
  
Although we have yet to complete the formal development in Agda-flat,
these should be provable from axioms \eqref{eq:1}--\eqref{eq:4} and
Fig.~\ref{fig:axitil}, together with some further assumptions about
the interval object and cofibrant types listed in
Fig.~\ref{fig:furact}.  Part (a) was carried out in prior work, albeit
in the setting with an impredicative universe of
propositions~\cite{PittsAM:aximct-jv}.  In the predicative version
considered here, we replace the impredicative universe of propositions
with axioms asserting that being cofibrant is a mere proposition
($\PROPCOF$), that cofibrant types are mere propositions ($\COFPROP$)
and satisfy propositional extensionality ($\COFEXT$). These axioms are
satisfied by $\Cset$ provided we interpret $\COF:\SET\FUN\SET$ as
$\COF\,A = \exists \varphi:\Omega\mathbin{,} {\varphi\in\mathsf{Cof}}
\wedge {A\EQ\{\HOLE:\UNIT\mid \varphi\}}$, using the subobject
$\mathsf{Cof}\mono\Omega$ corresponding to the face lattice in
\cite{CoquandT:cubttc} (see
~\cite[Definition~8.6]{PittsAM:aximct-jv}). Axioms $\COFO$, $\COFI$,
$\COFOR$, $\COFAND$, $\COF\forall\INT$ and $\STRAX$ correspond to the
axioms $\AX_5$--$\AX_9$ from~\cite{PittsAM:aximct-jv}; in $\STRAX$,
$\cong$ is the usual internal statement of isomorphism. $\COFAND$ is
the dominance axiom that guarantees that cofibrations compose. Note
that axiom $\COFOR$ uses an operation sending mere propositions
$\varphi$ and $\psi$ to the mere proposition $\varphi\OR\psi$ that is
the propositional truncation of their disjoint union; the existence of
this operation either has to be postulated, or one can add axioms for
\emph{quotient types}~\cite[Section~3.2.6.1]{HofmannM:extcit} to crisp
type theory, (of which propositional truncation is an instance), in
which case function extensionality~\eqref{eq:1} is no longer needed as
an axiom, since it is provable using quotient
types~\cite[Section~6.3]{HoTT}.  Since in this paper we have taken a
CCHM fibration to just give a composition operation for cofibrant
partial paths from $\O$ to $\I$ and not vice versa, in
Fig.~\ref{fig:furai} we have postulated a path-reversal operation
$\REV$; this and the other axioms for $\INT$ in that figure suffice to
give a ``connection algebra'' structure on $\INT$~\cite[axioms $\AX_3$
and $\AX_4$]{PittsAM:aximct-jv}.

Part (b) can be proved using a version of the \emph{glueing} operation
from \cite{CoquandT:cubttc}, which is definable within crisp type
theory as in \cite[Section~6]{PittsAM:aximct-jv} and
\cite[Section~4.3.2]{SpittersB:guactt}.  The strictness axiom $\STRAX$
in Fig.~\ref{fig:furact} is needed to define this; and the assumption
that cofibrant types are closed under $\INT$-indexed $\forall$
($\COF\forall\INT$) is used to define the
appropriate fibration structure for glueing.

Part (c) can be proved as in \cite[Section~6]{PittsAM:decua} using a
characterization of univalence somewhat simpler than the original
definition of Voevodsky~\cite[Section~2.10]{HoTT}. The axiom $\STRAX$
gets used to turn isomorphisms into paths; and the axiom
$\COF\forall\INT$ is used to ``realign'' fibration structures that
agree on their underlying types (see~\cite[Lemma~6.2]{PittsAM:decua}).

\begin{rem}[\textbf{The interval is connected}]\normalfont
  \label{rem:intc}
  Fig.~\ref{fig:furact} does not include an axiom asserting that the
  interval is connected, because that is implied by its tinyness
  (Fig.~\ref{fig:axitil}). Connectedness was postulated as $\AX_1$
  in~\cite{PittsAM:aximct-jv} and used to prove that CCHM fibrations
  are closed under inductive type formers (and in particular that the
  natural number object is fibrant). The
  proof~\cite[Thm~8.2]{PittsAM:aximct-jv} that the interval in cubical
  sets is connected essentially uses the fact that $\Cset$ is a
  cohesive topos (Remark~\ref{rem:local-topos}). However it also
  follows directly from the tinyness property: connectedness holds iff
  $(\INT\FUN\BOOL)\bij \BOOL$, where $\BOOL = \UNIT + \UNIT$ is the
  type of Booleans. Since we postulate that $\INT\FUN\HOLE$ has a
  right adjoint, it preserves this coproduct and hence
  $(\INT\FUN \BOOL) \bij (\INT\FUN\UNIT)+(\INT\FUN\UNIT) \bij
  \UNIT+\UNIT = \BOOL$.
\end{rem}

\begin{rem}[\textbf{Alternative models}]\normalfont
  \label{rem:omed}
  We have focussed on axioms satisfied by $\Cset$ and the CCHM notion
  of fibration in that presheaf topos. However, the universe
  construction in Theorem~\ref{thm:universe} also applies to the
  cartesian cubical set models~\cite{LicataDR:carctt}, and we expect it
  is possible to give proofs in crisp type theory of its fibrancy and
  univalence as well.

  In this paper we only consider ``cartesian'' path-based models of
  type theory, in which a path is an arbitrary function out of an
  interval object, or in other words, the path functor is given by an
  exponential. The models in \cite{LumsdainePL:simmuf} and
  \cite{CoquandT:modttc} are not cartesian in that sense---the path
  functors they use are right adjoint to certain functorial
  cylinders~\cite{GambinoN:frocrp} not given by cartesian
  product.\footnote{Furthermore, obvious candidates for an interval
    object are not necessarily tiny in those models---for example, for
    the 1-simplex $\Delta[1]$ the exponential $\Delta[1]\FUN(\_)$ in
    the topos $\hat{\Delta}$ of simplicial sets does not have a right
    adjoint; thanks to a referee for pointing this out.} However,
  those path functors do have right adjoints (given by right Kan
  extension to suitable ``shift'' functors on the domain category of
  the presheaf toposes involved) and universes in these models can be
  constructed using the method of Theorem~\ref{thm:universe}. (Our Agda
  proof of that theorem does not depend upon the path functor being an
  actual exponential.) A proof in crisp type theory that those
  universes are fibrant and univalent may require a modification of
  our axiomatic treatment of cofibrancy; we leave this for future
  work.
\end{rem}

\subparagraph*{Universe hierarchies.}  Given that there are many
notions of fibration that one may be interested in, it is natural to
ask how relationships between them induce relationships between
universes of fibrant types.  As motivating examples of this, we might
want a cubical type theory with a universe of fibrations with
\emph{regularity}, an extra strictness corresponding to the
computation rule for identity types in intensional type theory; or a
three-level directed type theory with non-fibrant, fibrant, and
co/contravariant universes.  Towards building such hierarchies, in the
companion code\footnote{see \texttt{proposition-6-2.agda} at
  \url{https://doi.org/10.17863/CAM.22369}}  we have shown in crisp
type theory that universes are functorial in the notion of fibration
they encapsulate: when one notion of fibrancy implies another, the
first universe includes the second.

\begin{propo}
Let $\C^1, \C^2 : \PATH\,\SET_n \FUN\SET_{1\sqcup n}$ be two
notions of composition, $\ISFIB^1$ and $\ISFIB^2$ the corresponding
fibration structures, and $\U^1$ and $\U^2$ the corresponding classifying
universes.  A morphism of fibration structures is a function
$f_{\Gamma,A} : \ISFIB^1\,\Gamma\,A \FUN \ISFIB^2\,\Gamma\,A$ for all
$\Gamma$ and $A$, such that $f$ is stable under reindexing (given $h :
\Delta \FUN \Gamma$, and $\phi : \ISFIB^1\,\Gamma\,A$,
$f_{\Gamma,A}(\phi) \circ (\PATH\ACT h) \equiv f_{\Delta,A \comp
  f}(\phi[h])$).  Then a morphism of fibrations $f$ induces a function
$\U^1 \FUN \U^2$, and this preserves identity and composition. \qed
\end{propo}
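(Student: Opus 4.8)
The plan is to build the map $\U^1\FUN\U^2$ directly out of the classifier data of Theorem~\ref{thm:universe} and then use its uniqueness clause $\CODEEL$ as the engine of the argument. Write $\EL^i = \PAIR{E^i}{\upsilon^i}$ for the classifying fibration over $\U^i$ produced by that theorem, so $E^i :: \U^i\FUN\SET$ and $\upsilon^i :: \ISFIB^i\,\U^i\,E^i$, and let $\CODE^i$, $\ELCODE^i$, $\CODEEL^i$ be the corresponding components of~\eqref{eq:20}. Since $\CODE$ applies only to crisp arguments I would take the morphism $f$ to be given crisply as well (this is anyway forced); then $\PAIR{E^1}{f_{\U^1,E^1}\,\upsilon^1} :: \FIB^2_0\,\U^1$, and I would \emph{define} the induced map by
\[
  \bar f \;=\; \CODE^2\,\PAIR{E^1}{f_{\U^1,E^1}\,\upsilon^1} \;::\; \U^1\FUN\U^2 .
\]
The crucial point is that $\CODEEL^2$ makes $\bar f$ the \emph{unique} $\gamma :: \U^1\FUN\U^2$ with $\EL^2[\gamma] \EQ \PAIR{E^1}{f_{\U^1,E^1}\,\upsilon^1}$: from any such $\gamma$ one gets $\gamma \EQ \CODE^2(\EL^2[\gamma]) \EQ \CODE^2\,\PAIR{E^1}{f_{\U^1,E^1}\,\upsilon^1} = \bar f$, using $\CONG$ on the hypothesis.

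Two routine facts feed the argument. First, reindexing of a fibration is functorial, $\Psi[\delta\comp\gamma] \EQ (\Psi[\delta])[\gamma]$; unwinding the definition of $\HOLE[\HOLE]$ this reduces to $\PATH\ACT(\delta\comp\gamma) \EQ \PATH\ACT\delta\comp\PATH\ACT\gamma$, which is immediate from the definition of $\PATH\ACT$ in~\eqref{eq:5}. Secondly, over a fixed base the $f$-pushforward commutes with reindexing: for $h :: \Delta\FUN\Gamma$,
\[
  \bigl(\PAIR{A}{f_{\Gamma,A}\,\alpha}\bigr)[h] \;\EQ\; \PAIR{A\comp h}{f_{\Delta,A\comp h}(\alpha[h])} ,
\]
which is exactly the stability hypothesis on $f$, together with the definitions of reindexing on $\ISFIB^1$ and on $\FIB_0$. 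I would also define the composite of two morphisms of fibration structures pointwise, $(g\comp f)_{\Gamma,A} := g_{\Gamma,A}\comp f_{\Gamma,A}$, which is again stable because $f$ and $g$ are.

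Given these, preservation of identities is immediate: for the identity morphism $f_{\Gamma,A} = \ID$ the pushforward $\PAIR{E}{f_{\U,E}\,\upsilon}$ is just $\EL$ itself, so $\bar f = \CODE\,\EL \EQ \ID$ by the equation $\CODE\,\EL \EQ \ID$ already established in the proof of Theorem~\ref{thm:universe}. For composition, given $f$ from $\ISFIB^1$ to $\ISFIB^2$ and $g$ from $\ISFIB^2$ to $\ISFIB^3$, I would compute
\[
  \begin{aligned}
    \EL^3[\bar g\comp\bar f]
    &\;\EQ\; (\EL^3[\bar g])[\bar f]
    \;\EQ\; \bigl(\PAIR{E^2}{g_{\U^2,E^2}\,\upsilon^2}\bigr)[\bar f] \\
    &\;\EQ\; \PAIR{E^2\comp\bar f}{g_{\U^1,E^2\comp\bar f}(\upsilon^2[\bar f])}
    \;\EQ\; \PAIR{E^1}{g_{\U^1,E^1}(f_{\U^1,E^1}\,\upsilon^1)} ,
  \end{aligned}
\]
using, in order: functoriality of reindexing; $\ELCODE^3$ for $\bar g$; stability of the $g$-pushforward; and $\ELCODE^2$ for $\bar f$, which supplies $E^2\comp\bar f\EQ E^1$ and, after the corresponding transport, the equality of the remaining components. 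Since $g_{\U^1,E^1}\comp f_{\U^1,E^1} = (g\comp f)_{\U^1,E^1}$, the right-hand side is the pushforward of $\EL^1$ along $g\comp f$, so the uniqueness observation from the first paragraph gives $\bar g\comp\bar f \EQ \overline{g\comp f}$.

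I expect the only real difficulty to be bookkeeping rather than mathematics: tracking which data must be crisp so that $\CODE^i$ may legitimately be applied (exactly what Agda-flat is good at checking), and maintaining the transport discipline when passing between an equality of dependent pairs $\PAIR{-}{-}$ and equalities of its two components --- the fiddliest instance being the substitution of $E^2\comp\bar f\EQ E^1$ into the type $\ISFIB^2\,\U^1\,(E^2\comp\bar f)$ of the second component in the composition step. All of these steps are forced and involve no genuine choices, which is presumably why the proposition is stated without proof.
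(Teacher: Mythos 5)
Your proposal is correct and takes essentially the approach the paper intends (the paper itself defers the proof to its Agda formalization, \texttt{proposition-6-2.agda}): define $\bar f$ as $\CODE^2$ of the $f$-pushforward of $\EL^1$, and derive preservation of identity and composition from $\ELCODE$, the uniqueness supplied by $\CODEEL$, functoriality of reindexing, and the stability hypothesis on $f$. I see no gaps; note only that the stability equation in the proposition's statement has a typo ($f_{\Delta,A\comp f}$ should read $f_{\Delta,A\comp h}$), which you silently and correctly fix.
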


\section{Conclusion}
\label{sec:con}

Since the appearance of the CCHM~\cite{CoquandT:cubttc} constructive
model of univalence, there has been a lot of work aimed at analysing
what makes this model tick, with a view to simplifying and
generalizing it. Some of that work, for example by Gambino and
Sattler~\cite{GambinoN:frocrp,SattlerC:equepm}, uses category theory
directly, and in particular techniques associated with the notion of
Quillen model structure. Here we have continued to pursue the approach
that uses a form of type theory as an internal language in which to
describe the constructions associated with this model of univalent
foundations~\cite{PittsAM:aximct-jv,SpittersB:guactt}. For
those familiar with the language of type theory, we believe this
provides an appealingly simple and accessible description of the
notion of fibration and its properties in the CCHM model and in
related models. We recalled why there can be no internal description
of the univalent universe itself if one uses ordinary type theory as
the internal language. Instead we extended ordinary type theory with a
suitable modality and then gave a universe construction that hinges
upon the tinyness property enjoyed by the interval in cubical sets. We
call this language \emph{crisp type theory} and our work inside it has
been carried out and checked using an experimental version of Agda
provided by Vezzosi~\cite{Agda-flat}.


\end{document}